\newtheorem{theorem}{Theorem}
\newtheorem{remark}{Remark}
\newtheorem{conjecture}{Conjecture}
\newtheorem{lemma}{Lemma}
\begin{document}

\begin{center}

{\Large Stochastic Loewner Evolutions, Fuchsian Systems and Orthogonal Polynomials}

\vspace{5mm}

{\large Igor Loutsenko and Oksana Yermolayeva}

\vspace{5mm}

\end{center}

\begin{abstract}
We  find a wide class of Levy-Loewner evolutions for which the value of integral means beta-spectrum $\beta(q)$ at $q=2$ is the maximal real eigenvalue of a three-diagonal matrix. The second moments of derivatives of corresponding conformal mappings are expressed through solutions of matrix Fuchsian systems with three singular points.
\end{abstract}

\begin{section}{Introduction}

\label{Introduction}

The Schramm-Loewner Evolution is so far the only model where a multi-fractal spectrum is not trivial and known explicitly (see e.g. \cite{BS, BDZ, D, DNNZ, Has, LSLE, LYSLE}). Explicit computation of harmonic spectrum is almost impossible task for deterministic fractals and it is a difficult task even for random fractals. Apart from the case of the Schramm-Loewner evolution, two exact results for the second moment of harmonic measure were obtained in the study of Levy-Loewner Evolution (LLE) \cite{DNNZ, LYLLE}. Extending this study, we will present an infinite set of non-trivial examples.

The Bounded Whole-Plane Levy-Loewner evolution (also called exterior whole-plane LLE) \footnote{For a short introduction to the whole-plane LLE see e.g. \cite{LYLLE}, \cite{JS}. A good introduction to chordal LLE can be found in \cite{ORGK, ROKG}. For a quick introduction to Levy processes see e.g. \cite{A}.} is a stochastic process of the growth of the curve out of the origin $z=0$ in the complex plane $z$ (see Figure \ref{LLE}).

The curve is described by time dependent conformal mapping $z=F(w,t)$ from exterior of the unit circle $|w|>1$ in the $w$-plane to the complement of the curve in the $z$ plane. This mappimg obeys the Levy-Loewner equation
\begin{equation}
\frac{\partial F(w, t)}{\partial t}=w\frac{\partial F(w,
t)}{\partial w}\frac{w+e^{i L(t)}}{w-e^{i L(t)}}, \quad \lim_{w\to\infty} F'(w,t)=e^t, \quad -\infty < t < \infty,
\label{LE}
\end{equation}
where $L(t)$ is a Levy process and prime denotes the $w$-derivative. The limit $t \to -\infty$ corresponds to the origin of the curve.

Here we consider the Levy processes
without a drift. Without loss of generality we set
\begin{equation}
\langle L(t) \rangle = 0,
\label{Drift=0}
\end{equation}
where $\langle \rangle$ denote expectation (ensemble average).

The LLE is a conformally invariant stochastic process in the sense that the time evolution is consistent with composition of conformal maps. When $L(t)$ is a continuous function of time, the conformal
mapping $z=F(w,t)$ describes growth of a random continuous curve. The only continuous (modulo uniform drift) process of Levy type is
the Brownian motion
\begin{equation}
L(t)=\sqrt{\kappa} B(t), \quad
\langle (B(t+\tau)-B(t))^2 \rangle = |\tau|,
\label{Bt}
\end{equation}
where the positiive parameter $\kappa$ is the ``temperature" of the Brownian motion. The stochastic Loewner evolution driven by Brownian motion is called
Schramm-Loewner Evolution or SLE$_\kappa$. Since it describes
non-branching planar stochastic curves with a conformally-invariant
probability distribution, SLE is a useful tool for description of
boundaries of critical clusters in two-dimensional equilibrium
statistical mechanics. In this picture, different $\kappa$
correspond to different classes of models of statistical mechanics
(a good introduction to SLE for physicists can be found e.g. in \cite{C,G} as
well as mathematical reviews are given e.g. in \cite{GL,GL2}).

On the other hand, in the general case, when $L(t)$ is a discontinuous function of time the conformal mapping describes stochastic (infinitely) branching curve.

\begin{figure}
\centering
\includegraphics[width=115mm]{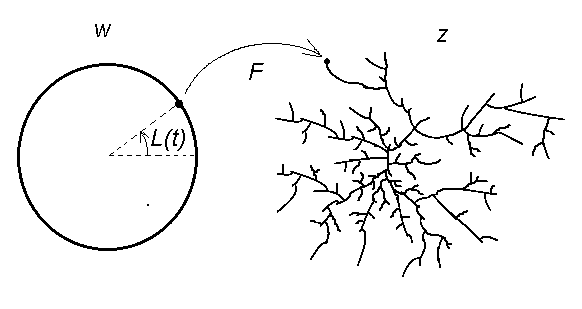}
 \caption{\small Bounded whole-plane LLE: Time dependent mapping $F$ from the exterior of the unit circle in the $w$-plane to the complement of the curve in $z$-plane. Point $w=e^{iL(t)}$ on the unit circle is mapped to the growing tip of the curve.}
 \label{LLE}
 \end{figure}

The unbounded (or interior) version of the whole-plane LLE is an
inversion $F(w,t)\to 1/F(1/w,t)$ of the above bounded version. Obviously, this mapping from an interior of the unit disc in the $w$-plane to the complement of stochastic curve which grows from infinity towards the origin in the $z$-plane also satisfies the Loewner equation (\ref{LE}) with different asymptotic conditions given now at $w=0$
$$
F'(w=0,t)=e^{-t}.
$$
This version of LLE has been studied mainly due to its relationship with the problem of Bieberbach
coefficients of conformal mappings \cite{DNNZ,LSLE,LYLLE}.

To get the multi-fractal spectrum of LLE one has to find first the so-called ``$\beta$-spectrum" \footnote{For introduction to multi-fractal analysis and relationship between different kinds of multi-fractal spectra see e.g. \cite{BS}, \cite{HHD}, \cite{HJKPS}, \cite{Has}, \cite{Wiki}.}: The integral means $\beta(q)$-spectrum of the
domain is defined through the $q$th moment of a derivative of conformal mapping at the unit
circle (i.e. at $|w|\to 1$) as follows
\begin{equation}
\beta(q)=\overline{\lim}_{\epsilon\to 0\pm}\frac{\log
\int_0^{2\pi}\langle\left|F'\left(e^{\epsilon+i\varphi}\right)\right|^q\rangle
d\varphi}{-\log \pm\epsilon},
\label{beta}
\end{equation}
where $\pm$ signs correspond to bounded and unbounded whole-plane LLE (in what follows the upper/lower signs will correspond to the bounded/unbounded version respectively).

To find $\beta(q)$-spectrum (\ref{beta}) one needs to estimate moments of derivative: One can show that the value
$$
\rho=e^{\mp qt}\langle\left|F'\left(e^{iL(t)}w,t\right)\right|^q\rangle,
$$
is time independent and is a function of \footnote{$\bar w$ denotes complex conjugate of $w$} $w,\bar w$ and $q$ only \cite{BS, BDZ, DNNZ, LSLE, LYSLE, LYLLE}. Moreover it satisfies the linear integro-differential equation
\begin{equation}
\mathcal{L}^\pm\rho=0, 
\label{Lrho}
\end{equation}
where the linear operator $\mathcal{L}^\pm$ equals
\begin{equation}
\mathcal{L}^\pm=-\hat\eta+w\frac{w+1}{w-1}\partial_w+\bar w\frac{\bar w+1}{\bar
w-1}\partial_{\bar w}-\frac{q}{(w-1)^2}-\frac{q}{(\bar w-1)^2}+q\mp q .
\label{Leta}
\end{equation}
In (\ref{Leta}), the linear operator $\hat\eta$ acts on functions of $w, \bar w$ as follows (for details see e.g. \cite{BS}, \cite{DNNZ} or \cite{LYLLE})
$$
\hat\eta\rho(w, \bar w)=\lim_{t\to
0}\frac{1}{t}\int_0^{2\pi}\left(\rho(w, \bar
w)-\rho\left(e^{i\varphi}w, e^{-i\varphi}\bar
w\right)\right) P(\varphi, t) d\varphi,
$$
where $P(\varphi,t)$ is the probability density that $L(t)=\varphi$ under
condition $L(0)=0$. One can present operator $\hat \eta$ in the more convenient (integro-differential) form as
\begin{equation}
\hat\eta\rho(w, \bar w)=\frac{\kappa}{2}\left(w\partial_w-\bar w\partial_{\bar
w}\right)^2\rho(w,\bar w)+\int_{-\pi}^{\pi}\left(\rho(w, \bar
w)-\rho\left(e^{i\varphi}w, e^{-i\varphi}\bar
w\right)\right) d\eta(\varphi), \quad d\eta(\varphi)\ge 0
\label{eta}
\end{equation}
where $d\eta(\varphi)$ is a symmetric Levy measure. In polar coordinates $(r,\phi)$, such that $w=re^{i\phi}, \bar{w} =re^{-i\phi}$, the operator $\hat\eta$ acts only wrt angular variable $\phi$, i.e. $\hat \eta$ commutes with $r$ and
$$
\hat\eta f(\phi)=-\frac{\kappa}{2}\partial_\phi^2f(\phi)+\int_{-\pi}^{\pi}\left(f(\phi)-f(\phi+\varphi)\right) d\eta(\varphi)
$$
whatever is dependence of $f$ on $r$. In other words, it acts diagonally on the basis $w^n\bar w^m$
\begin{equation}
\hat \eta w^n\bar w^m = \eta_{n-m} w^n\bar w^m, \quad n,m \in \mathbb{Z} .
\label{hat_eta_w_w}
\end{equation}
Since we consider only Levy processes without drift (\ref{Drift=0}) (i.e. with Levy measure symmetric wrt reflection $\phi\to-\phi$), the characteristic coefficients $\eta_n$ are real, non-negative and symmetric
$$
\eta_n=\bar \eta_n=\eta_{-n}, \quad \eta_0=0.
$$
They express through the Levy measure as
\begin{equation}
\eta_n=\frac{\kappa n^2}{2}+\int_{-\pi}^{\pi}(1-\cos n\phi)d\eta(\phi) .
\label{LK}
\end{equation}
The above equation is a particular case of Levy-Khintchine formula for a class of Levy processes we deal with.

The probability density $P(\phi,t)=\frac{1}{2\pi}\sum_{n=-\infty}^\infty e^{-\eta_n t+in\phi}$ is a fundamental solution of the integro-differential equation of the parabolic type on the unit circle
$$
\partial_t P(\phi,t)=-\hat\eta P(\phi,t), \quad P(\phi,0)=\delta(\phi) .
$$

Note that the particular case $d\eta(\phi)=0$ corresponds to the Schramm-Loewner evolution SLE$_\kappa$. Here, eq. (\ref{Lrho},\ref{Leta}) becomes the second-order differential equation with
$$
\hat\eta=-\frac{\kappa}{2}\partial_\phi^2, \quad \eta_n=\frac{\kappa n^2}{2},
$$
which allows to find exact form of the multi-fractal spectrum of the SLE$_{\kappa}$ \cite{BS,BDZ,Has,LYSLE}.

In the case of the bounded whole-plane LLE one can use analyticity of $\rho(w,\bar w)$ at infinity
as boundary conditions for linear equation (\ref{Lrho},\ref{Leta}). Namely (see e.g. \cite{LYLLE} for details), in this version of LLE $\rho$ has following asymptotic expansion at $w\to\infty$
$$
\rho=\sum_{i=0, j=0}^\infty \frac{\rho_{i,j}}{w^i\bar w^j}, \quad \rho_{0,0}=1
$$
with expansion coefficients $\rho_{i,j}$ fixed uniquely by equation (\ref{Lrho},\ref{Leta}). In other word, once analytic, non-vanishing at infinity solution of (\ref{Lrho},\ref{Leta}) is found, it gives (modulo constant factor) expectation of the moment of the derivative of conformal mappings of the bounded whole-plane LLE.

Similarly, for the unbounded version of LLE one has to look for non-vanishing analytic solution of the corresponding equation, this time at the origin $w=0$
$$
\rho=\sum_{i=0, j=0}^\infty \rho_{i,j}w^{i}\bar w^{j}, \quad \rho_{0,0}=1 .
$$
Similarly to the bounded case, such a solution is unique.

\end{section}

\begin{section}{Beta Spectrum of the Unbounded Whole-Plane LLE and Fuchsian Systems}
\label{q2}

Let us try to find the value of $\beta$-spectrum at $q=2$: Representing $\rho(w, \bar w)$ in the form
\begin{equation}
\rho(w, \bar w)=(1-w)(1-\bar w)\Theta(w, \bar w),
\label{rho1}
\end{equation}
from (\ref{Lrho},\ref{Leta}) with $q=2$ we get
\begin{equation}
-\hat\eta[(1-w)(1-\bar w)\Theta]+(w+1)(\bar w
-1)w\frac{\partial\Theta}{\partial w}+(\bar w+1)(w -1)\bar
w\frac{\partial\Theta}{\partial \bar w}+3(2w \bar w-w-\bar
w)\Theta=0 ,
\label{Ltheta1}
\end{equation}
where $\Theta$ is the series
\begin{equation}
\Theta=\theta_0(\xi)+\sum_{i=1}^{\infty}\left(w^i+\xi^i w^{-i}
\right) \theta_i(\xi), \quad \xi=r^2=w\bar w .
\label{theta_1}
\end{equation}
Substituting it into (\ref{Ltheta1}), and  taking into account the facts that operator $\hat \eta$ commutes with $\xi=r^2=w\bar w$ and that, according to (\ref{hat_eta_w_w}), $\hat \eta
[w^i]=\eta_i w^i$, from (\ref{Ltheta1},\ref{theta_1}) we get a three-term
differential recurrence relation for $\theta_i(\xi)$.
\begin{equation}
2\xi(\xi-1)\theta'_i(\xi)-\left(\eta_i+i+(\eta_i-i-6)\xi\right)\theta_i(\xi)+\xi(\eta_i+i-2)\theta_{i+1}(\xi)+(\eta_i-i-2)\theta_{i-1}(\xi)=0, \quad \theta_{-i}(\xi)=\xi^i\theta_i(\xi)
\label{rec_diff}
\end{equation}
It is easy to see that this recurrence relation has a solution that truncates at $i=N$, i.e. $\theta_i=0$ for $i \ge N$, when $\eta_N=N+2$ for some $N$.

The simplest truncation happens when $N=1$, i.e. when $\eta_1=3$, which corresponds to the case of conjecture by authors of \cite{DNNZ} proved by us in \cite{LYLLE}. Before studying the general case of an arbitrary $N$ we will consider separately the case of truncation at $N=2$. Here not only $\beta(2)$, but also $\rho(w,\bar w)$ can be found explicitly for arbitrary $\eta_i$, $i\not=2$. We have the following
\begin{theorem}
\label{N=2}
For the unbounded whole plane LLE with $\eta_2=4$ the integral means $\beta$-spectrum has the following value at $q=2$
\begin{equation}
\beta(2)=\frac{6-\eta_1+\sqrt{\eta_1^2-4\eta_1+12}}{2} ,
\label{betaN2}
\end{equation}
\end{theorem}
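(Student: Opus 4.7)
The strategy is to exploit the truncation at $\eta_2=4$ to reduce (\ref{rec_diff}) to a $2\times 2$ Fuchsian ODE system in $\xi$, and then to read $\beta(2)$ off its indicial exponents at $\xi=1$.

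First I would set $\theta_i\equiv 0$ for $i\ge 2$ and check consistency with (\ref{rec_diff}): the $i=2$ relation collapses to $(\eta_2-4)\theta_1=0$, automatically satisfied when $\eta_2=4$, while the $i=1$ relation loses its $\theta_2$ coupling. What remains is the rank-two linear system
\begin{align*}
(\xi-1)\theta'_0 &= -3\theta_0+2\theta_1,\\
2\xi(\xi-1)\theta'_1 &= -(\eta_1-3)\theta_0+\bigl(\eta_1+1+(\eta_1-7)\xi\bigr)\theta_1,
\end{align*}
Fuchsian on $\mathbb{CP}^1$ with regular singularities only at $\xi=0,1,\infty$. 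The relevant solution is pinned down by the unbounded-LLE boundary condition: analyticity of $\rho$ at $w=0$ with $\rho_{0,0}=1$ forces $\theta_0,\theta_1$ to be regular at $\xi=0$. The residue matrix at $\xi=0$ has eigenvalues $0$ and $-(\eta_1+1)/2$, so the regular-at-origin solution is one-dimensional and is fixed up to scale by the normalization.

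The core of the argument is the indicial analysis at $\xi=1$. Expanding the coefficients there, the residue matrix of the system is
$$R_1=\begin{pmatrix}-3 & 2\\ -(\eta_1-3)/2 & \eta_1-3\end{pmatrix},$$
with characteristic polynomial $\lambda^{2}-(\eta_1-6)\lambda-2(\eta_1-3)=0$ and eigenvalues
$$\lambda_\pm=\frac{\eta_1-6\pm\sqrt{\eta_1^{2}-4\eta_1+12}}{2}.$$
Analytically continued from $\xi=0$, the selected solution has asymptotics $c_+(1-\xi)^{\lambda_+}+c_-(1-\xi)^{\lambda_-}$ for some connection constants $c_\pm$, and $\lambda_-<0<\lambda_+ -\lambda_-$ so the more negative exponent $\lambda_-$ governs the leading singularity whenever $c_-\neq 0$.

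Finally I would extract $\beta(2)$ by angular averaging. Setting $w=re^{i\varphi}$ and using $\int_0^{2\pi}w^{a}\bar w^{b}\,d\varphi=2\pi r^{2a}\delta_{ab}$ in (\ref{rho1})--(\ref{theta_1}) one finds
$$\frac{1}{2\pi}\int_0^{2\pi}\rho(re^{i\varphi},re^{-i\varphi})\,d\varphi=(1+\xi)\theta_0(\xi)-2\xi\,\theta_1(\xi),$$
which inherits the $(1-\xi)^{\lambda_-}$ singularity. Since $r=e^{\epsilon}$ with $\epsilon\to 0^-$ gives $1-\xi\sim-2\epsilon$, definition (\ref{beta}) then yields $\beta(2)=-\lambda_-$, which is exactly (\ref{betaN2}). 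The genuinely delicate step is to verify that the connection coefficient $c_-$ does not vanish for generic $\eta_1$; I would do this either by integrating the Fuchsian system explicitly in hypergeometric form (as suggested by the theorem's promise that $\rho$ itself can be written down) or by a monodromy argument ruling out an accidental cancellation of the dominant mode.
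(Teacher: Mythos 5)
Your setup is correct and all of it checks out against the paper: the truncated $2\times 2$ system is exactly the paper's (\ref{dtheta0})--(\ref{dtheta1}) before the substitution $\theta_i=(1-\xi)^{-\beta}f_i$; your residue matrix $R_1$ at $\xi=1$ is $-B$ of (\ref{B}) for $N=2$; and your indicial roots $\lambda_\pm$ reproduce (\ref{betaN2}) via $\beta=-\lambda_-$. In fact the paper's reparametrization (\ref{eta1beta}) is nothing but the statement that $-\beta$ is a root of your quadratic $\lambda^2-(\eta_1-6)\lambda-2(\eta_1-3)=0$, so up to this point you and the paper are doing the same thing in different coordinates. The angular-averaging identity giving $(1+\xi)\theta_0-2\xi\,\theta_1$, hence identifying $\beta(2)$ with the blow-up rate of $\theta_0-\theta_1$ as $\xi\to1$, is also the paper's reading of (\ref{beta},\ref{rho1},\ref{theta_1}).

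The genuine gap is the step you explicitly defer: non-vanishing of the connection coefficient $c_-$. This is not a technical afterthought, it is the proof. Sign or positivity considerations (Lemma \ref{blowup}-type reasoning) cannot close it here, because for $0\le\eta_1<3$ \emph{both} eigenvalues $-\lambda_\pm$ of $B$ are positive (e.g.\ $3\pm\sqrt3$ at $\eta_1=0$), so whether $\beta(2)$ equals $-\lambda_-$ or $-\lambda_+$ hinges entirely on whether the analytic-at-origin solution contains the $\lambda_-$ mode --- exactly the ``accidental'' cancellation that the paper itself says does not admit an elementary exclusion for general $N$. Moreover, your hedge ``for generic $\eta_1$'' would prove a strictly weaker statement than the theorem, which asserts the formula for every admissible $\eta_1$; and a monodromy argument alone will not decide vanishing of a specific connection coefficient. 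The paper closes the gap by carrying out your first suggested option: substituting $\theta_i=(1-\xi)^{-\beta}f_i$, eliminating $f_1$ to obtain a hypergeometric equation with parameters (\ref{abc}), taking $f_0={}_2{\rm F}_1(a,b,c;\xi)$, computing $f_1$ explicitly, and using the Gauss value ${}_2{\rm F}_1(a,b,c;1)=\Gamma(c)\Gamma(c-a-b)/\bigl(\Gamma(c-a)\Gamma(c-b)\bigr)$ (applicable since $c-a-b=(\beta^2-4\beta+6)/(\beta-2)>0$ for $\beta>2$) to verify that $f_0-f_1$ is finite and non-zero at $\xi=1$. That single computation establishes both $c_-\neq0$ and that your averaging functional does not annihilate the $\lambda_-$ eigenvector (a second potential cancellation your sketch never checks, though one can verify $v_{-,0}\neq v_{-,1}$ directly since $\lambda_-=-1$ is impossible). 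Until you perform this explicit evaluation, your argument only shows $\beta(2)\in\{-\lambda_+,-\lambda_-\}$, not the theorem.
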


\begin{proof}
The proof is based on explicit computation of $\rho(w,\bar w)$. It is convenient to re-parametrize $\eta_1\ge 0$ through $\beta$ given by lhs of eq.(\ref{betaN2}), i.e.
\begin{equation}
\eta_1=\frac{\beta^2-6\beta+6}{2-\beta}, \quad 2<\beta< 3+\sqrt{3} ,
\label{eta1beta}
\end{equation}
and introduce new dependent variables $f_0, f_1$ such that $\theta_i(\xi)=(1-\xi)^{-\beta} f_i(\xi)$.

In the case of truncation at $N=2$ (i.e. when $\eta_2=4$), system (\ref{rec_diff}) writes as
\begin{equation}
\xi(\xi-1)f'_0+(3-\beta)\xi f_0-2\xi f_1=0
\label{dtheta0}
\end{equation}
\begin{equation}
2\xi(\xi-1) f'_1-\left(\eta_1+1+(\eta_1-7+2\beta)\xi\right) f_1+(\eta_1-3) f_0=0
\label{dtheta1}
\end{equation}
Then expressing $f_1$ through $f_0, f'_0$ into (\ref{dtheta0}) and substituting the result in (\ref{dtheta1}) we get the hypergeometric equation for $f_0$
$$
\xi(\xi-1)f''_0+\left((a+b+1)\xi-c\right)f'_0+abf_0=0
$$
where
\begin{equation}
a=\frac{\beta^2-5\beta+8}{2(2-\beta)}, \quad b=3-\beta, \quad c=\frac{\beta^2-7\beta+8}{2(2-\beta)}
\label{abc}
\end{equation}
Since we are looking for $\rho(w,\bar w)$ which is non-vanishing and analytic at $w=0$, from all linearly independent solutions of the above hypergeometric equation we choose the one which is non-vanishing and analytic at $\xi=0$, i.e.
$$
f_0={}_2{\rm F}_1(a,b,c;\xi)
$$
Then from (\ref{dtheta0}) it follows that
$$
f_1=\frac{\xi-1}{2}f'_0+\frac{3-\beta}{2}f_0=\frac{ab}{2c}(\xi-1){}_2{\rm F}_1(a+1,b+1,c+1;\xi)+\frac{3-\beta}{2}{}_2{\rm F}_1(a,b,c;\xi)
$$
and according to (\ref{rho1},\ref{theta_1})
$$
\rho(w, \bar w)=(1-w)(1-\bar w)(1-w\bar w)^{-\beta}\left(f_0(w\bar w)+(w+\bar w)f_1(w \bar w)\right) .
$$
According to (\ref{beta},\ref{rho1},\ref{theta_1}), the value of the $\beta$-spectrum equals the blowup rate (also called ``polynomial" growth rate) of $\theta_0-\theta_1=(1-\xi)^{-\beta} (f_0(\xi)-f_1(\xi))$ at $\xi\to 1$. Using the Gauss identity
$$
{}_2{\rm F}_1(a,b,c;1)=\frac{\Gamma(c)\Gamma(c-a-b)}{\Gamma(c-a)\Gamma(c-b)},\quad \Re (c-a-b)>0 ,
$$
and taking eqs. (\ref{eta1beta}, \ref{abc}) into account, it is easy to see that $f_0-f_1$ is finite and non-vanishing at $\xi=1$. Therefore, the blowup rate of $\theta_0-\theta_1$ equals $\beta$ and is given by lhs of eq.(\ref{betaN2}). This completes the proof.

\end{proof}

To consider he case of truncation at $i=N$ (i.e. when $\eta_N=N+2$), we note that in such a case the system of ODEs (\ref{rec_diff}) is a Fuchsian system with three singular points $\xi=0,1,\infty$. This system rewrites in the standard form as
\begin{equation}
\theta'(\xi)=\frac{A\theta(\xi)}{\xi}-\frac{B\theta(\xi)}{\xi-1},
\label{Fuchs}
\end{equation}
where $\theta$ is $N$-dimensional vector $\theta(\xi)=(\theta_0(\xi),\theta_1(\xi), \dots, \theta_{N-1}(\xi))^T$ and $A$ and $B$ are correspondingly two and three-diagonal constant $N\times N$ matrices (matrix indexes are running from 0 to $N-1$):
$$
A=-\frac{1}{2}
\begin{bmatrix}
0 & 0 &  &  &  &\\
3-\eta_1 & \eta_1+1 & 0  &  &  & &\\
 & 4-\eta_2 & \eta_2+2 & 0 &  & &\\
& & \ddots & & & &\\
& & 2+i-\eta_i & \eta_i +i & 0 & &\\
& & & \ddots & & & \\
& & & N-\eta_{N-2} & \eta_{N-2}+N-2 & 0 \\
& & &  & N+1-\eta_{N-1} & \eta_{N-1}+N-1
\end{bmatrix} ,
$$
\begin{equation}
B=-\frac{1}{2}
\begin{bmatrix}
-6 & 4 &  &  &  &\\
3-\eta_1 & 2\eta_1-6 & 1-\eta_1  &  &  & &\\
 & 4-\eta_2 & 2\eta_2 -6 & -\eta_2 &  & &\\
& & \ddots & & & &\\
& & 2+i-\eta_i & 2\eta_i -6 & 2-i-\eta_i & &\\
& & & \ddots & & & \\
& & & N-\eta_{N-2} & 2\eta_{N-2}-6 & 4-N-\eta_{N-2} \\
& & &  & N+1-\eta_{N-1} & 2\eta_{N-1}-6
\end{bmatrix} .
\label{B}
\end{equation}
Note that the two-diagonal matrix $A$ has a zero eigenvalue (zero Fuchsian exponent at $\xi=0$) which is a consequence of existence of an analytic at $w=0$ solution of (\ref{Fuchs}) (the one corresponding to solution of (\ref{Lrho},\ref{Leta}) we are looking for). This zero eigenvalue is non-degenerate, while other eigenvalues are negative, which confirms uniqueness of the above analytic solution (modulo constant factor). Indeed, this solution can be constructed starting from $\theta(0)$
\begin{equation}
\theta(\xi)=T(\xi) \theta(0), \quad T(\xi)=\sum_{n=0}^\infty T_n \xi^n, \quad A\theta(0)=0,
\label{transfer}
\end{equation}
where matrices $T_n$ are defined recurrently as
$$
T_0=1, \quad T_{n+1}=(A-n-1)^{-1}(A-B-n)T_n .
$$
Since all eigenvalues of $A-n-1$ are negative, the Taylor series (\ref{transfer}) exists and is unique (and converges absolutely for $\xi<1$).

The eigenvalues of the matrix $B$ are minus Fuchsian characteristic exponents at $\xi=1$. They are roots of the characteristic polynomial $P_N(\beta)=\det(\beta-B)$ which is determined by the three-term recurrence relation
\begin{equation}
\beta P_n = P_{n+1}+a_n P_{n-1}+b_n P_n , \quad P_0=1, \quad P_{-1}=0,
\label{Pn}
\end{equation}
where
\begin{equation}
a_1=3-\eta_1, \quad a_n=(\eta_n-n-2)(\eta_{n-1}+n-3)/4, \quad n>1 , \quad
b_n=3-\eta_n, \quad n\ge 0 .
\label{ab}
\end{equation}
When $a_n>0$, $0<n<N-1$, all eigenvalues of $B$ are real and non-degenerate and polynomials $P_n(\beta)$, $0\le n\le N$ are orthogonal wrt a non-signed measure (see e.g. \cite{Ch}). Although one can find a wide range of the Levy processes for which condition of positivity of $a_n$ holds, it is not always the case and $P_n$ are not necessarily orthogonal wrt a non-signed measure (i.e., $P_n(\beta)$ can be so-called ``formal" orthogonal polynomials).

\begin{lemma}

Matrix $B$, given by eq. (\ref{B}), has at least one real non-negative eigenvalue and $\beta(2)$ equals a non-negative eigenvalue of $B$.
\label{blowup}
\end{lemma}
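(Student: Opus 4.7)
My plan is to identify $\beta(2)$ as a Frobenius characteristic exponent of the Fuchsian system (\ref{Fuchs}) at the regular singular point $\xi=1$, and then to invoke positivity of $\rho$ to force this exponent to be a real non-negative eigenvalue of $B$.

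First I would compute the circular integral of $\rho$. Writing $w=re^{i\phi}$ gives $w^i+\xi^i w^{-i}=2r^i\cos(i\phi)$ and $(1-w)(1-\bar w)=1-2r\cos\phi+r^2$, so (\ref{rho1}) and (\ref{theta_1}) together with orthogonality of $\{\cos(i\phi)\}$ yield
\[
\int_0^{2\pi}\rho(re^{i\phi},re^{-i\phi})\,d\phi=2\pi\bigl[(1+\xi)\theta_0(\xi)-2\xi\,\theta_1(\xi)\bigr],\qquad \xi=r^2.
\]
By definition (\ref{beta}), $\beta(2)$ equals the blowup rate of $(1+\xi)\theta_0(\xi)-2\xi\theta_1(\xi)$, or equivalently of $\theta_0-\theta_1$, as $\xi\to 1^-$.

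Next I would invoke Frobenius theory for (\ref{Fuchs}) at $\xi=1$. Near this regular singular point there is a fundamental matrix of the form $(1-\xi)^{-B}U(\xi)$ with $U$ holomorphic and invertible in a neighborhood of $\xi=1$; diagonalizing $B$, or using its Jordan form (which introduces polynomial-in-$\log(1-\xi)$ factors in the resonant case), gives a basis of solutions with leading behavior $(1-\xi)^{-\beta_j}$, where $\beta_j$ runs over the eigenvalues of $B$. The analytic-at-$\xi=0$ solution from (\ref{transfer}) is real-valued, unique up to a scalar, extends holomorphically along $(0,1)$, and therefore admits a unique expansion $\theta(\xi)=\sum_j c_j\theta^{(j)}(\xi)$ in this Frobenius basis; its blowup rate at $\xi=1$ equals $\max\{\operatorname{Re}\beta_j:c_j\ne 0\}$.

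Finally I would use positivity to pin the dominant exponent. Since $\rho=e^{2t}\langle|F'|^2\rangle\ge 0$, the function $(1+\xi)\theta_0-2\xi\theta_1$ is non-negative on $[0,1)$; moreover subharmonicity of $|F'|^2$ makes $\int_0^{2\pi}|F'(re^{i\phi})|^2 d\phi$ non-decreasing in $r$, so it is bounded below by a positive constant and hence $\beta(2)\ge 0$. If the dominant Frobenius exponent were a non-real $a\pm ib$ with $b\ne 0$, then since $B$ is real its eigenvalues come in conjugate pairs, and reality of $\theta$ forces conjugate coefficients; the conjugate pair therefore contributes to $(1+\xi)\theta_0-2\xi\theta_1$ a leading term $(1-\xi)^{-a}\bigl[A\cos(b\log(1-\xi))+B\sin(b\log(1-\xi))\bigr]$ which oscillates in sign and dominates all smaller exponents along suitable sequences $\xi_n\to 1^-$, contradicting non-negativity. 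Hence the dominant exponent is a real non-negative eigenvalue of $B$, which equals $\beta(2)$; in particular $B$ has such an eigenvalue. The main obstacle is making the sign-oscillation argument airtight in the resonant case where extra $\log(1-\xi)$ factors appear, but logarithms are sub-polynomial and preserve both the leading power law and the sign oscillation of a complex conjugate pair, so the conclusion persists.
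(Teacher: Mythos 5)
Your proposal is correct and follows essentially the same route as the paper's proof: both identify $\beta(2)$ with the blowup rate as $\xi\to 1$ of (a scalar reduction of) the analytic-at-$\xi=0$ solution, expand that solution in the Frobenius basis at the regular singular point $\xi=1$ whose exponents are $-\beta_l$ with $\beta_l$ the eigenvalues of $B$ (with polynomial-in-$\log(1-\xi)$ factors in the resonant/non-diagonalizable case), and use non-negativity of $\rho$ to rule out a dominant complex-conjugate pair via the sign oscillation of $(1-\xi)^{-a}\cos\left(b\log(1-\xi)\right)$ terms, concluding that the dominant exponent is a real non-negative eigenvalue equal to $\beta(2)$. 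The only cosmetic differences are that you integrate over the circle first and work with $(1+\xi)\theta_0-2\xi\theta_1$ (the paper argues pointwise with $\Theta(r,\phi)$, using that $\Theta=\rho/|1-w|^2\ge 0$ cannot change sign for $\xi<1$), and that you justify $\beta(2)\ge 0$ directly by subharmonicity of $|F'|^2$ where the paper simply cites Makarov.
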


\begin{remark}

\rm{Similar statement is valid for the bounded version of LLE for the matrix given by (\ref{Bext}). We will refer to both statements as Lemma \ref{blowup}}.

\end{remark}

\begin{proof}

It is known (see e.g. \cite{IY}) that an asymptotics of a general solution of the Fuchsian system (\ref{Fuchs}) in a neighborhood of singular point $\xi=1$ has the form
\begin{equation}
\theta\to\sum_{l=0}^{N-1} (1-\xi)^{-\beta_l} f^{(l)}(C,\xi), \quad \xi\to 1,
\label{fuchs_xi=1}
\end{equation}
where $\beta_l$ is an eigenvalue of $B$ and $C=\{C_0,\dots C_{N-1}\}$ are arbitrary constants. If matrix $B$ is diagonalizable and non-resonant \footnote{Matrix is resonant if it has eigenvalues which differ by a non-zero integer}, the vector functions $f^{(l)}$  are finite at $\xi=1$ and $f^{(l)}(C,\xi)=C_l\tilde{f}^{(l)}$, with $\tilde{f}^{(l)}$ being an eigenvector of $B$ corresponding to an eigenvalue $\beta_l$. If $B$ is not diagonalizable or/and resonant, $f^{(l)}(C,\xi)$ are polynomials in $x=\log(1-\xi)$ (i.e. they are of finite non-negative integer degrees in $x$).

From (\ref{fuchs_xi=1}) and (\ref{theta_1}) it follows that the corresponding solution of (\ref{Ltheta1}) has the following asymptotics at $\xi\to 1$ (or $r\to 1$ in polar coordinates $(r,\phi)$, $w=re^{i\phi}$)
$$
\Theta(r,\phi)\to\sum_{l=0}^{N-1}(1-r)^{-{\rm Re}\beta_l}\sum_{m=0}^{M_l}G_{lm}(\phi)x^m\cos\left(x{\rm Im}\beta_l-g_{lm}(\phi)\right), \quad x=\log(1-r^2), \quad r\to 1 ,
$$
where all functions $G_{lm}(\phi)$ are bounded. As $r\to 1$, the above finite sum will be dominated by term(s) with the maximal ${\rm Re}\beta_l$ and, if there are several of them, one has to choose the one(s) with the maximal $M_l$ among them. In other words
\begin{equation}
\Theta\to(1-r)^{-\max{\rm Re}\beta_l}x^M \mathcal{G}(x,\phi), \quad r\to 1 ,
\label{Theta_r=1}
\end{equation}
where $M$ is some non-negative integer and $\mathcal{G}(x,\phi)$ is a sum of finite number of the cosine terms
\begin{equation}
\mathcal{G}(x,\phi)=\sum_{\alpha_i\in\{{\rm Im}\beta_1,\dots{\rm Im}\beta_{N-1}\}}\mathcal{G}_{i}(\phi)\cos\left(\alpha_i x-\gamma_i(\phi)\right) .
\label{cosi}
\end{equation}
From (\ref{Theta_r=1}, \ref{cosi}) it follows that at least one eigenvalue with maximal real part has zero imaginary part. Indeed, let us suppose that the above is not true. Then all $\alpha_i$ in (\ref{cosi}) are non-zero. Since (\ref{cosi}) is the sum of a finite number of the cosine terms with $x$ running along the semi-infinite interval, $\mathcal{G}(x,\phi)$ is not of constant sign on this interval. Therefore, any real solution of (\ref{Ltheta1}) that truncates at $N$ in (\ref{theta_1}), including an analytic at $\xi=0$ solution we are looking for, oscillates, changing its sign as $\xi\to 1$. However, an analytic at $\xi=0$ solution cannot change sign at $\xi<1$, since both $\rho(w,\bar w)$ and $(1-w)(1-\bar w)$ are not negative at $\xi<1$. Therefore, there exists a real eigenvalue which is not less than the real part of any non-real eigenvalue.

Next, if all real $\beta_l$ were negative, then (according to (\ref{Theta_r=1})) any non-negative solution would have a negative blow-up rate and therefore $\beta(2)$ given by (\ref{beta}) would be negative. This is also a contradiction, since the integral means $\beta$-spectrum must be non-negative (see e.g. \cite{M}). Thus, matrix $B$ must have at least one non-negative eigenvalue, and $\beta(2)$ equals to one of non-negative eigenvalues of $B$.

\end{proof}

By consequence, in the $N=2$ case all eigenvalues are real\footnote{Also, one can check that in the $N=3$ case all roots of cubic characteristic polynomial are always real for any $\eta_1,\eta_2\ge0$, whatever are signs of $a_n$.} (as confirmed by Theorem \ref{N=2}). Note that here both eigenvalues can be positive, but $\beta(2)$ always equals the maximal one.

Another useful example where $\beta(2)$ and spectrum of $B$ can be found explicitly is the $N$-truncated unbounded SLE, i.e. the unbounded whole-plane SLE$_\kappa$ with $\kappa=2(N+2)/N^2$ (for details see \cite{LYSLE})
$$
\eta_n=\frac{N+2}{N^2}n^2, \quad n=0..N,
$$
\begin{equation}
\beta_l=\frac{N+2-(2N^2-3N-6)l+2(N+2)l^2}{N^2}, \quad l=0..N-1.
\label{betal}
\end{equation}
Here all eigenvalues $\beta_l$ are real, while not all $a_n$ are positive. The value of the integral means $\beta$ spectrum at $q=2$ equals the maximal eigenvalue $\beta_{N-1}$, i.e. $\beta(2)=\max\{\beta_l,l=0\dots N-1\}=(5N-2)/N$ (see \cite{LYSLE,DNNZ}).

The above example shows that spectrum of $B$ can be degenerate and resonant. For instance, the spectrum is resonant for
$N=2,3,10,12,14,15,\dots$.  \\ 

To present an example of $B$ with complex eigenvalues, first consider the case when $N=6$ in (\ref{betal}), i.e. for $\eta_n=4n^2/18$, $n\le6$. Here the spectrum is degenerate and $\beta_0=\beta_3=2/9$, $\beta_1=\beta_2=-2/3$ (two other eigenvalues $\beta_4=2$ and $\beta_5=14/3$ are non-degenerate). Let us now introduce the following perturbation of the above LLE
$$
\eta_n=\left(\frac{4}{9}-\delta\kappa\right)\frac{n^2}{2}+18\delta\kappa, \quad \delta\kappa>0 \quad 0<n\le 6,
$$
which can be, for instance, the LLE driven by a combination of the Brownian motion with $\kappa=4/9-\delta\kappa$ and a compound Poisson process with jumps uniformly distributed over the circle. The Levy measure of the latter equals $d\eta(\phi)=(9\delta\kappa/\pi)d\phi$ (see eq.(\ref{LK})). When $\delta\kappa$ are small, the spectrum of $B$ has two positive and four imaginary eigenvalues. The latter four, up to $\mathcal{O}(\delta\kappa)$, equal
$$
\frac{2}{9}\pm \frac{21}{128}\sqrt{-30\delta\kappa}, \quad -\frac{2}{3}\pm \frac{5}{128}\sqrt{-70\delta\kappa}, \quad \delta\kappa>0 .
$$ \\

Returning to the generic LLE, we note that since analytic at $\xi=0$ solution of (\ref{Fuchs}) is a linear combination of solutions corresponding to different Fuchsian exponents at $\xi=1$, one may assume that the $\beta$-spectrum corresponds to the lowest real exponent at $\xi=1$ (i.e. maximal real eigenvalue of $B$).  Indeed, an absence of the corresponding component in the above linear combination could happen only by an ``accident".  Although, as we have seen above, there is no such an ``accident" in the $N=2$ case as well in the case of the $N$-truncated SLE, its absence for the generic $N>2$ case (or generic $N>4$ case for the bounded LLE, see next section) requires a proof that does not seem to be elementary.

Next, we will consider a family of LLEs which are small deformations of the $N$-truncated SLEs. It is more convenient to consider deformation of bounded version where, in contrast to (\ref{betal}), all eigenvalues but one are negative. The latter fact facilitates the study.

\end{section}

\begin{section}{Bounded LLE}

For $q=2$ one looks for a solution of (\ref{Lrho},\ref{Leta}) that is analytic at $w\to\infty$, representing $\rho(w,\bar w)$ in the following form
$$
\rho(w,\bar w)=(1-w^{-1})(1-\bar w^{-1})\Theta(w,\bar w).
$$
Then for $q=2$ we get
\begin{equation}
-\hat\eta\left[\left(1-\frac{1}{w}\right)\left(1-\frac{1}{\bar w}\right)\Theta\right]+\left(1+\frac{1}{w}\right)\left(1-\frac{1}{\bar w}\right)w\frac{\partial\Theta}{\partial w}+\left(1+\frac{1}{\bar w}\right)\left(1-\frac{1}{w}\right)\bar
w\frac{\partial\Theta}{\partial \bar w}+\left(\frac{1}{w}+\frac{1}{\bar w}-\frac{2}{w \bar w}\right)\Theta=0 .
\label{Ltheta_ext}
\end{equation}
For $\Theta$ given by the Fourier series
$$
\Theta=\theta_0(\xi)+\sum_{i=1}^{\infty}\left(w^{-i}+\xi^{-i} w^{i} ,
\right) \theta_i(\xi), \quad \xi=r^2=w\bar w
$$
we get the following recurrence relation for $\theta_i(\xi)$
$$
2\xi(\xi-1)\theta'_i(\xi)-\left(\eta_i+2-i+(\eta_i+i)\xi\right)\theta_i(\xi)+(\eta_i+i+2)\theta_{i+1}(\xi)+\xi(\eta_i-i+2)\theta_{i-1}(\xi)=0, \quad \theta_{-i}(\xi)=\xi^{-i}\theta_i(\xi) .
$$
This relation truncates if $\eta_N=N-2$. In case of the truncation it can be rewritten in the Fuchsian form (\ref{Fuchs}) with
$$
A=-\frac{1}{2}
\begin{bmatrix}
2 & -4 &  &  &  &\\
0 & \eta_1+1 & -3-\eta_1  &  &  & &\\
 & 0 & \eta_2 & -4-\eta_2 &  & &\\
& & \ddots & & & &\\
& & 0 & \eta_i + 2 - i & -2-i-\eta_i & &\\
& & & \ddots & & & \\
& & & 0 & \eta_{N-2}-N+4 & -N-\eta_{N-2} \\
& & &  & 0 & \eta_{N-1}+3-N
\end{bmatrix} ,
$$
\begin{equation}
B=-\frac{1}{2}
\begin{bmatrix}
2 & -4 &  &  &  &\\
-1-\eta_1 & 2\eta_1+2 & -3-\eta_1  &  &  & &\\
 & -\eta_2 & 2\eta_2+2 & -4 -\eta_2 &  & &\\
& & \ddots & & & &\\
& & i-2-\eta_i & 2\eta_i +2 & -2-i-\eta_i & &\\
& & & \ddots & & & \\
& & & N-4-\eta_{N-2} & 2\eta_{N-2}+2 & -N-\eta_{N-2} \\
& & &  & N-3-\eta_{N-1} & 2\eta_{N-1}+2
\end{bmatrix} .
\label{Bext}
\end{equation}
Now matrix $A-B$ (i.e. residue matrix at $\xi=\infty$) has a single zero eigenvalue and $N-1$ positive eigenvalues which confirms that an analytic at $w=\infty$ solution is an $N$-truncated solution (demonstration of this fact repeats arguments used for unbounded LLE in the previous section).

Note that, since $\eta_n>0$ when $n>0$, while $\eta_N=N-2$, the first non-trivial case of truncation takes place at $N=3$ (i.e. for $N>2$). This differs from the unbounded case considered in the previous section, where non-trivial truncations take place for all $N>0$.

The characteristic polynomial $P_N(\beta)$ can be found with the help
of the three-term recurrence relation (\ref{Pn}) with the following coefficients
\begin{equation}
a_1=\eta_1+1, \quad a_n=(\eta_n-n+2)(\eta_{n-1}+n+1)/4, \quad n>1 , \quad b_n=-\eta_n-1, \quad n\ge 0 .
\label{ab_ext}
\end{equation}

Similarly to the unbounded version an analog of Lemma \ref{blowup} for matrix $B$, given by (\ref{Bext}) holds. Let us now prove the following

\begin{theorem}
\label{bounded_q2}
For the bounded whole-plane LLE with
$$
\eta_N=N-2
$$
and
$$
\eta_n=\frac{\kappa_N n^2}{2}+\delta\eta_n, \quad \kappa_N=2\frac{N-2}{N^2}, \quad 0<n<N,
$$
where $\delta\eta_n\not=0$ are sufficiently small \footnote{One can easily check with the help of the Levy-Khinchine formula (\ref{LK}) that such families of Levy processes exist, e.g. by setting $\kappa=\kappa_N-\delta \kappa$, $\delta\kappa>0$ and $d\eta(\phi)>0$ a such that $\int_0^{2\pi}(1-\cos N\phi)d\eta(\phi)=\delta \kappa N^2/2$.}, the value of the integral means $\beta$-spectrum at $q=2$ is given by the greatest eigenvalue of the matrix (\ref{Bext}).
\end{theorem}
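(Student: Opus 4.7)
The plan is to proceed by analytic perturbation anchored at $\delta\eta_n = 0$, which is precisely the $N$-truncated bounded SLE$_{\kappa_N}$. The bounded analogue of Lemma \ref{blowup} already guarantees that $\beta(2)$ is some non-negative eigenvalue of the matrix $B$ in (\ref{Bext}); the theorem sharpens this by asserting that it is the \emph{greatest} real eigenvalue, and this is the content that must be transported from the SLE anchor to a small neighborhood in $\delta\eta_n$.

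First I would handle the unperturbed case $\delta\eta_n = 0$. With $\eta_n = \kappa_N n^2/2$, the three-term recurrence (\ref{Pn}) with coefficients (\ref{ab_ext}) admits a closed-form diagonalisation parallel to the unbounded computation yielding (\ref{betal}); combined with the explicit formula for the second moment of the derivative of $N$-truncated bounded SLE$_{\kappa_N}$ from \cite{LYSLE,DNNZ}, one reads off (i) all $N$ eigenvalues $\beta_l$ of $B$, (ii) a strictly maximal real eigenvalue $\beta_\star$ which coincides with the SLE value of $\beta(2)$, and (iii) the explicit analytic-at-$w=\infty$ solution of (\ref{Ltheta_ext}). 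Point (iii) is what lets me verify the essential non-degeneracy fact: the projection of this analytic solution onto the eigenspace of $\beta_\star$ is non-zero, for otherwise its blow-up rate at $\xi\to 1$ would be strictly less than $\beta_\star$, contradicting the already known SLE value of $\beta(2)$.

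Next I would propagate these facts to small $\delta\eta_n$. The matrix $B$ in (\ref{Bext}) depends linearly on $\{\eta_n\}$, and at the anchor $\beta_\star$ is a simple isolated eigenvalue, so Kato--Rellich analytic perturbation theory yields a real-analytic branch $\delta\eta\mapsto\beta_\star(\delta\eta)$ together with its spectral projector, and the spectral gap separating $\beta_\star$ from the rest of the spectrum survives on a neighborhood. The bounded analogue of the transfer-matrix construction (\ref{transfer}), built from the zero eigenvector of $A-B$ at $\xi=\infty$ and extended along the real axis, depends real-analytically on $\delta\eta_n$, so the component of the analytic solution along the eigenvector of $\beta_\star(\delta\eta)$ is a continuous function that is non-zero at $\delta\eta = 0$ and therefore remains non-zero on a (possibly smaller) neighborhood. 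Inserting this into the asymptotic expansion (\ref{fuchs_xi=1}) and the bounded analogue of (\ref{Theta_r=1}), the blow-up rate of $\Theta$ at $\xi=1$ equals $\beta_\star(\delta\eta)$, which by the gap argument is the greatest real eigenvalue of the perturbed $B$. Lemma \ref{blowup} then identifies this blow-up rate with $\beta(2)$.

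The main obstacle is establishing the non-vanishing projection at the SLE anchor: this is the very "accident" warned about at the end of Section \ref{q2}, and ruling it out is not formal — it requires matching the known value of $\beta(2)$ for the $N$-truncated bounded SLE$_{\kappa_N}$ against the exponent $\beta_\star$ extracted from the closed-form spectrum. A secondary subtlety is resonance: the unperturbed $B$ can have eigenvalues differing by a non-zero integer (producing logarithmic terms in the local Fuchsian expansion at $\xi=1$), but a generic small $\delta\eta_n$ destroys resonance, so after perturbation one is in the non-resonant regime where the polynomial-only asymptotic analysis of the proof of Lemma \ref{blowup} applies verbatim; the leading exponent $\beta_\star$ itself is unaffected by resonance as long as it is strictly maximal.
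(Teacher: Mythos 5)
Your anchor-and-perturb skeleton is the same as the paper's, but the mechanism you use to transport the word ``greatest'' from the SLE anchor to small $\delta\eta_n$ is different from, and substantially harder than, what the paper does — and it is the hard part of your version that contains a gap. The paper never touches spectral projections of the analytic solution at all. It observes that at the anchor the spectrum (\ref{betaextl}) is real and simple with \emph{exactly one non-negative} eigenvalue $\beta_0$; since the eigenvalues are simple, the derivatives $\partial\beta_l/\partial\eta_j$ (computed from the characteristic polynomial, with $\prod_{k\neq l}(\beta_l-\beta_k)$ in the denominator) are finite, so for small $\delta\eta_n$ the spectrum stays real and simple and the maximal eigenvalue remains the only non-negative one. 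Lemma \ref{blowup} (bounded version) says $\beta(2)$ equals \emph{some} non-negative eigenvalue of (\ref{Bext}); since there is only one, it is the greatest one, and the proof ends there. You actually state both ingredients — Lemma \ref{blowup} in your opening paragraph, and the persistence of simplicity and signs via Kato--Rellich — but never combine them; combined, they make your entire second step (non-vanishing projection at the anchor, analyticity of the transfer construction, continuity of the component along the top eigenvector) unnecessary.

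The gap itself: your propagation of the non-vanishing projection is not justified at resonant anchors, and resonant anchors occur in this very family. The ``component of the analytic solution along the eigenvector of $\beta_\star(\delta\eta)$'' is a connection coefficient of the Fuchsian system at $\xi=1$; its continuity in $\delta\eta$ at $\delta\eta=0$ is precisely what can fail when the anchor is resonant, because the Frobenius solution attached to $\beta_\star$ is built by inverting $B-\beta_\star+k$, $k\geq 1$, which degenerates when $\beta_\star-k$ is an eigenvalue. Concretely, for $N=6$ one has $\kappa_6=2/9$ and $\beta_l=\tfrac{1}{9}-\tfrac{5}{6}l+\tfrac{1}{18}l^2$, so $\beta_0-\beta_3=2$: an integer resonance involving the maximal eigenvalue itself. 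There the monodromy eigenvalues $e^{2\pi i\beta_0}$ and $e^{2\pi i\beta_3}$ collide, no spectral projection of the monodromy isolates the $\beta_0$-component continuously, and connection coefficients relative to the degenerating Frobenius basis can blow up as $\delta\eta\to 0$. Your observation that a generic small $\delta\eta_n$ destroys resonance does not repair this, because you need continuity \emph{at} $\delta\eta=0$, which is exactly where the resonance sits. As written, your argument establishes the theorem only for non-resonant anchors; for the remaining $N$ you must either carry out the genuinely nontrivial resonant connection analysis, or discard the projection argument altogether in favor of the paper's sign-counting route, which is insensitive to resonance.
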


\begin{proof}

In the case of the bounded SLE$_k$ at $q=2$ and $\kappa=\kappa_N=2(N-2)/N^2$, the spectrum of the matrix $B$ reads as \cite{LYSLE}
\begin{equation}
\beta_l=\frac{2\kappa_N+(3\kappa_N-4)l+\kappa_Nl^2}{4}, \quad l=0..N-1
\label{betaextl}
\end{equation}
with maximal eigenvalue $\beta(2)=\beta_0$ being the only non-negative eigenvalue (for derivation of the spectra see \cite{LYSLE}, \cite{BS}).

Since the spectrum is real and non-degenerate, it remains such when we add small perturbations $\delta\eta_n$
to $\eta_n=\kappa_N n^2/2$, $n<N$. Indeed, let $P_N(\beta)$ be a characteristic polynomial of matrix $B$
$$
P_N=\beta^N+\sum_{n=0}^{N-1} p_n \beta^n, \quad p_n=p_n(\eta_1, \dots \eta_{N-1}) ,
$$
then
$$
\frac{\partial \beta_l}{\partial \eta_j}=-\frac{\sum_{n=0}^{N-1} \frac{\partial p_n}{\partial \eta_j}\beta_l^n}{\prod_{k\not=l}(\beta_l-\beta_k)} .
$$
Since $\beta_i\not=\beta_j$ when $i\not=j$, all derivatives of the spectrum wrt $\eta_n$ are finite and real and therefore spectrum remains real and non-degenerate for small perturbations $\eta_n \to \eta_n+\delta \eta_n$, $\delta\eta_n\not=0$, $0<n<N$.

In eq. (\ref{betaextl}) the maximal eigenvalue $\beta_0$ is positive while the rest of eigenvalues are negative. Since $\partial\beta_l/\partial \eta_i$ are finite, the sign of eigenvalues does not change for small deformations $\eta_n\to\eta_n+\delta \eta_n$, so the maximal eigenvalue still remains the only non-negative eigenvalue. Thus, from Lemma \ref{blowup} it follows that $\beta(2)$ equals the maximal eigenvalue, which completes the proof.

\end{proof}

The next proposition considers generic nontrivial $N<5$ cases

\begin{theorem}
\label{N3_N4}

For the bounded whole-plane LLE with $\eta_N=N-2$, where $N<5$, the value of the integral means $\beta$-spectrum at $q=2$ is the maximal real eigenvalue of the three-diagonal matrix (\ref{Bext}).

\end{theorem}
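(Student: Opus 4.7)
The strategy is to combine Lemma \ref{blowup}, which already guarantees that $\beta(2)$ is a non-negative real eigenvalue of $B$, with a case analysis of the spectrum of $B$ for $N = 3$ and $N = 4$ that identifies this non-negative eigenvalue with the maximal real one.

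For $N = 3$, the off-diagonal product $a_1 = \eta_1 + 1$ is automatically strictly positive, and $a_2 = \eta_2(\eta_1+3)/4 \ge 0$. Whenever $\eta_2 > 0$, both are strictly positive, so $B$ is similar to a real-symmetric tridiagonal matrix and has three real simple eigenvalues. A direct expansion along the first column yields $\det(B) = \eta_2(3+\eta_1)/4 \ge 0$ and $\operatorname{tr}(B) = -(3+\eta_1+\eta_2) < 0$. Together with Lemma \ref{blowup}, a simple sign count rules out every configuration except ``exactly one non-negative eigenvalue'', which must then be the maximum and hence $\beta(2)$. The degenerate case $\eta_2 = 0$ (corresponding to a Levy measure supported only at $\phi = \pm\pi$) is handled by direct diagonalisation of the explicit $3 \times 3$ matrix.

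For $N = 4$, the analogous calculation gives $\det(B) = -\eta_2(3+\eta_1)(1+\eta_3)/4$, strictly negative on the interior of the Levy-admissible domain. On the subdomain where in addition $a_3 = (\eta_3-1)(\eta_2+4)/4 > 0$, i.e.\ $\eta_3 > 1$, all three off-diagonal products are positive and $B$ is similar to a real-symmetric Jacobi matrix with four real simple eigenvalues. Since $\det(B)$ does not vanish there and the subdomain is connected, the number of positive eigenvalues is locally constant; checking the SLE reference point $\eta_n = n^2/8$ (where this number equals $1$) identifies it as $1$ throughout, and Lemma \ref{blowup} delivers the claim on this region. For the complementary subdomain $\eta_3 \le 1$, where complex eigenvalues are possible, the plan is to reduce the $4 \times 4$ Fuchsian system (\ref{Fuchs}) to a single fourth-order scalar ODE for $\theta_0(\xi)$ with regular singularities at $0, 1, \infty$, realize the analytic-at-zero solution as a generalized hypergeometric series, and invoke the classical connection formulas to show that its Frobenius expansion at $\xi = 1$ has a non-vanishing coefficient along the mode associated with the maximal real eigenvalue of $B$.

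The main obstacle is this last step: excluding an ``accidental'' vanishing of the connection coefficient in the region $\eta_3 \le 1$ of the $N = 4$ case. Symmetrization is no longer available, so one must rely on the smallness of $N$ to keep the reduced scalar ODE fully explicit, allowing the connection coefficient to be written as a specific ratio of gamma factors whose non-vanishing can be read off on the admissible parameter range. A complementary input is the non-negativity of $\rho(w, \bar w)$: as in the proof of Lemma \ref{blowup}, it precludes a purely oscillatory dominant asymptotic at $\xi \to 1$ and therefore forces the dominant exponent to be real, which is precisely the obstruction to exporting the argument to $N \ge 5$.
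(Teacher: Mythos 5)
Your $N=3$ argument and the $N=4$ argument on the region $\eta_3>1$ are sound (the Jacobi-similarity, determinant/trace and continuity steps all check out against the explicit coefficients (\ref{ab_ext})), but the proof is incomplete precisely where you say it is: on $\eta_3\le 1$ you offer only a plan, and the plan is doubtful on its own terms. For a fourth-order Fuchsian equation the connection problem between $\xi=0$ and $\xi=1$ has no closed-form solution in general; the coefficients of the reduced scalar ODE here involve the three free parameters $\eta_1,\eta_2,\eta_3$ and do not reduce to a generalized hypergeometric equation whose $0\to 1$ connection coefficients are plain ratios of gamma factors, so ``read off the non-vanishing on the admissible range'' is not a step you can expect to execute. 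As written, the theorem remains unproved on an open subset of the parameter space.

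The deeper problem is that the entire connection-coefficient analysis is unnecessary, because you underuse Lemma \ref{blowup}, which you yourself invoke at the outset: it already asserts that $\beta(2)$ \emph{equals} a non-negative real eigenvalue of $B$. So an ``accidental'' vanishing of the component along the top mode only needs to be excluded when there are \emph{several} non-negative real eigenvalues; it suffices to show the set of non-negative real eigenvalues is a singleton, and this is elementary and uniform in the parameters. This is the paper's route: compute $P_3$ and $P_4$ explicitly from (\ref{Pn},\ref{ab_ext}) and observe that for $\eta_n>0$ every coefficient is positive except the constant term, which for $N=4$ equals $-\frac{1}{4}\eta_2(\eta_1+3)(\eta_3+1)<0$ (your own determinant formula). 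By Descartes' rule of signs there is exactly one positive root, zero is not a root, and every other real root is negative --- with no restriction to $\eta_3>1$, no case split, and no need to control complex eigenvalues at all, since the statement concerns only the maximal \emph{real} eigenvalue. Lemma \ref{blowup} then forces $\beta(2)$ to be that unique positive root, which is automatically the maximal real eigenvalue. Your determinant/trace sign count for $N=3$ and the deformation-from-SLE argument for $\eta_3>1$ are correct but reprove on subregions what this one-line sign count gives everywhere; note also that the genuine obstruction for $N\ge 5$ is not the reality of the dominant exponent (Lemma \ref{blowup} handles that for all $N$) but the possible failure of uniqueness of the non-negative eigenvalue, i.e.\ more than one sign change in $P_N$.
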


\begin{proof}

By direct computation, from (\ref{Pn},\ref{ab_ext}) we get
$$
P_3={\beta}^{3}+\left (\eta_{{1}}+\eta_{{2}}+3\right ){\beta}^{2}+\left (2
+\eta_{{1}}+\frac{5}{4}\,\eta_{{2}}+\frac{3}{4}\,\eta_{{2}}\eta_{{1}}\right )\beta-\frac{1}{4}
\,\eta_{{2}}\eta_{{1}}-\frac{3}{4}\,\eta_{{2}},
$$
$$
P_4={\beta}^{4}+\left (\eta_{{2}}+\eta_{{1}}+\eta_{{3}}+4\right ){\beta}^{
3}+\left (\frac{3}{4}\,\eta_{{2}}\eta_{{1}}+\frac{5}{2}\,\eta_{{2}}+\eta_{{3}}\eta_{{1
}}+\frac{3}{4}\,\eta_{{3}}\eta_{{2}}+2\,\eta_{{3}}+6+2\,\eta_{{1}}\right ){
\beta}^{2}+
$$
$$
+\left (\frac{1}{2}\,\eta_{{3}}\eta_{{2}}\eta_{{1}}
+\frac{3}{4}\,\eta_{{3}}
\eta_{{2}}+\eta_{{2}}+4+\frac{3}{4}\,\eta_{{2}}\eta_{{1}}+2\,\eta_{{1}}\right
)\beta-\frac{3}{4}\,\eta_{{3}}\eta_{{2}}-\frac{1}{4}\,\eta_{{3}}\eta_{{2}}\eta_{{1}}-\frac{3}
{4}\,\eta_{{2}}-\frac{1}{4}\,\eta_{{2}}\eta_{{1}} .
$$
Since $\eta_n>0$, $n>0$, the polynomial $P_4$ has a single negative coefficient. By Descartes' rule of sign it has only one positive root and the rest of roots are negative. The same \footnote{Actually, there is no need to consider $P_3$ separately since existence of only one non-negative eigenvalue for $N$ automatically implies the same for $N'<N$.} applies to $P_3$. By Lemma \ref{blowup}, $\beta(2)$ equals the maximal root of characteristic polynomial.

\end{proof}

\begin{figure}
\centering
\includegraphics[width=115mm]{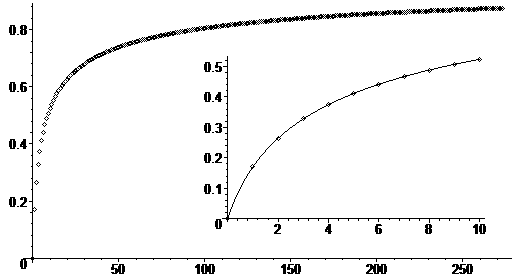}
 \caption{\small Dependence of $\beta(2)$ of the PLE$_\lambda$ on the arrival rate $\lambda=N-2$ of the Poisson process. Numerical results suggest that $\beta(2)\to 1$ as $\lambda\to\infty$.
 Results of numerical computations of the maximal eigenvalue in general case (i.e. case of infinite matrices) is shown separately in the zoomed region by the solid line.}
 \label{PLE}
 \end{figure}

Concluding this section we would like to mention another set of examples which are similar to the Hastings-Levitov (HL) models \cite{HL}, namely, LLE driven by a compound Poisson process with jumps distributed uniformly over the circle (we call them PLE$_\lambda$, where $\lambda$ is the arrival rate of the Poisson process) \footnote{Similarly to the HL model, the PLE$_\lambda$ is a composition of random elementary ``spike" mappings uniformly distributed over the circle. Note, however, that the HL models are not LLEs. LLEs driven by compound Poisson processes have been considered in \cite{JS}.}. Here, $\eta_n=\lambda=N-2$, $n>0$ and the recurrence coefficients $a_n$, $n=1\dots N-1$ are positive. Therefore all roots of $P_N(\beta)$ are real and simple. We have verified for $N$ up to several hundred that a polynomial $P_N(\beta)$ has the only one change of sign in the sequence of its coefficients. By consequence, in these (verified) models, $\beta(2)$ equals maximal eigenvalue which is the only non-negative eigenvalue (see Figure \ref{PLE}).

\end{section}

\begin{section}{Conclusions}
\label{conj}

In the present paper we considered a wide class of the whole-plane LLEs driven by a generic Levy process on a circle, without the drift, restricted by a condition that there exist $N$ for which $\eta_N=N+2$ in the unbounded version of LLE and $\eta_N=N-2$ in the bounded version. We showed that the second moment of derivative of the stochastic conformal mapping for the whole-plane LLE is expressed in terms of solution of $N$-dimensional Fuchsian system with three singular points. The value of the integral means $\beta$ spectrum at $q=2$ (i.e. $\beta(2)$) is a non-negative eigenvalue of a three-diagonal matrix.
We also were able to show that $\beta(2)$ is actually the maximal eigenvalue for several wide classes of processes.

Therefore, one might try to generalize the Theorem \ref{N=2}, \ref{bounded_q2} and \ref{N3_N4} by dropping the condition of smallness of $\delta\eta_n$ in the Theorem \ref{bounded_q2}. We recall that in the case of Theorems \ref{N=2}, \ref{N3_N4} condition of smallness is absent, so we may put forward the following

\begin{conjecture}

Let the whole-plane LLE is driven by a Levy process without drift for which there exists an integer $N$ a such that

\begin{itemize}

\item $\eta_N=N+2$ in the unbounded version of LLE

\item $\eta_N=N-2$ in the bounded version of LLE.

\end{itemize}

Then, the value of the integral means $\beta$-spectrum at $q=2$ equals the maximal real eigenvalue of the three-diagonal matrix $B$ given by (\ref{B}) or (\ref{Bext}) correspondingly.

\end{conjecture}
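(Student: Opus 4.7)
The goal is to upgrade Lemma~\ref{blowup}, which only asserts that $\beta(2)$ equals \emph{some} non-negative eigenvalue of $B$, to the statement that $\beta(2)$ equals the maximal real eigenvalue $\beta_\ast$. Let $\theta_{\rm an}(\xi)=T(\xi)\theta(0)$ be the unique (up to scale) analytic solution of the Fuchsian system (\ref{Fuchs}), as in (\ref{transfer}). Decomposing $\theta_{\rm an}$ at $\xi=1$ in the basis associated to the Jordan normal form of $B$, we may write
$$
\theta_{\rm an}(\xi)=\sum_{l}c_l\,(1-\xi)^{-\beta_l}f^{(l)}(\xi)+\ldots,
$$
with $c_l$ the connection coefficients. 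Following the argument of Lemma~\ref{blowup} (using positivity of $\rho$ and $(1-w^{\pm1})(1-\bar w^{\pm1})$ together with the oscillation argument based on (\ref{cosi})) one obtains that $\beta(2)$ equals the maximum of the $\Re\beta_l$ over those indices $l$ for which $c_l\neq 0$, and that this maximum is attained at a real $\beta_l$. Hence the conjecture is equivalent to the non-vanishing of the single connection coefficient $c_\ast$ corresponding to $\beta_\ast$.

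The plan is to rule out $c_\ast=0$ by combining three ingredients. \emph{Anchors:} Theorems~\ref{N=2}, \ref{bounded_q2} and \ref{N3_N4}, together with the explicit computation of the $N$-truncated SLE spectrum (\ref{betal}), (\ref{betaextl}) from \cite{LYSLE}, provide an explicit open set of admissible processes on which the identity $\beta(2)=\beta_\ast$, and therefore $c_\ast\neq 0$, is already established. \emph{Analyticity:} $c_\ast$ is an analytic function of $(\eta_1,\dots,\eta_{N-1})$ on the locus where $B$ is diagonalizable and non-resonant, since both the kernel vector $\theta(0)$ selected by $A\theta(0)=0$ and the Jordan basis of $B$ depend analytically on the entries. \emph{Propagation:} since admissible Levy processes with $\eta_N=N\pm 2$ form a connected parameter set in which the anchor loci are interior points, $c_\ast$ cannot vanish identically on any full-dimensional component; outside a proper analytic subvariety $\mathcal{Z}$ of the parameter space, the conjecture therefore holds.

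To deal with $\mathcal{Z}$ and with the resonant/degenerate strata (eigenvalue coincidences, Jordan blocks, and integer-spaced pairs of exponents at $\xi=1$), the approach is a limiting argument: approximate a point of $\mathcal{Z}$ by non-exceptional admissible parameters and invoke the positivity-based upper bound on the blowup rate of $\Theta$. Concretely, if $\beta(2)<\beta_\ast$ at a point $p\in\mathcal{Z}$ while $\beta(2)=\beta_\ast$ on a dense set approaching $p$, then the family $\rho_p$ would admit, in the limit, a non-negative function whose blowup rate jumps downward; this must be reconciled with the continuity of the coefficients of $T(\xi)$, $A$ and $B$ in the parameters, and with the uniqueness of the analytic solution guaranteed by (\ref{transfer}). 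The resonant strata, where $f^{(l)}$ acquires logarithmic factors as in (\ref{fuchs_xi=1}), are handled by noting that these factors only change the sub-leading part of $\Theta$ at $\xi\to1$ and do not modify the leading exponent $\beta_\ast$.

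The hard part is the last step: a priori $\beta(2)$, defined through the $\limsup$ in (\ref{beta}), is only upper semi-continuous in the parameters, so a naive continuity argument does not transport $c_\ast\neq 0$ from anchors to all of $\mathcal{Z}$. What is really needed is a quantitative estimate of $c_\ast$, or an explicit formula for it in terms of $(\eta_1,\dots,\eta_{N-1})$, from which non-vanishing could be read off. Producing such a formula seems to require isomonodromic deformation techniques for the Fuchsian system (\ref{Fuchs}) with three singular points, or an integral representation of $T(\xi)$ exploiting the tridiagonal form of $A$ and $B$ and the associated orthogonal polynomials $P_n$ of (\ref{Pn})--(\ref{ab}); this is precisely the step whose non-elementary nature is acknowledged in the discussion preceding Theorem~\ref{bounded_q2}, and it constitutes the principal obstacle to a complete proof of the conjecture.
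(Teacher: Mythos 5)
You should note at the outset that the statement you were given is posed in the paper as a \emph{Conjecture}: the authors deliberately give no proof, only the preceding discussion observing that the analytic-at-$\xi=0$ solution is a linear combination of local solutions at $\xi=1$, that $\beta(2)$ could fail to be the maximal real eigenvalue only if the component along the maximal exponent vanishes ``by accident'', and that ruling this out for generic $N$ ``requires a proof that does not seem to be elementary''. Your proposal correctly reduces the conjecture to exactly this point --- non-vanishing of the connection coefficient $c_\ast$ attached to the maximal real eigenvalue $\beta_\ast$ --- and your anchors (Theorems \ref{N=2}, \ref{bounded_q2}, \ref{N3_N4} and the $N$-truncated SLE spectra (\ref{betal}), (\ref{betaextl})) are the same special cases the paper establishes. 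So up to that point you are faithfully reconstructing the paper's own heuristic; but what you then offer is a program, not a proof, and your own final paragraph concedes that the decisive step is missing. Since the paper proves nothing here either, there is no ``paper proof'' for your sketch to match; the honest assessment is that your proposal does not close the conjecture.

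The concrete gaps are these. First, the analyticity-plus-propagation step only yields $c_\ast\neq 0$ off a proper analytic subvariety $\mathcal{Z}$; analytic continuation from anchors can never exclude vanishing \emph{on} $\mathcal{Z}$, which is precisely where the conjecture could fail. Second, your limiting argument across $\mathcal{Z}$ does not work as stated: Lemma \ref{blowup} confines $\beta(2)$ to the finite set of non-negative eigenvalues of $B$, and since $\beta(2)$ is defined by an $\overline{\lim}$ in (\ref{beta}) it is not continuous in the parameters --- nothing prevents the blowup rate from dropping to a smaller eigenvalue exactly on $\mathcal{Z}$, with the putative leading mass escaping into subleading terms in the limit; no contradiction with continuity of $T(\xi)$, $A$, $B$ arises, because the asymptotics at $\xi\to 1$ is not controlled uniformly in the parameters. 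Third, your treatment of the resonant strata is wrong as stated: logarithmic factors need not be subleading --- the paper's own asymptotics (\ref{Theta_r=1}) allows a factor $x^M$ at the \emph{leading} exponent, and in a resonant or non-diagonalizable situation the solution attached to a lower exponent can contaminate the series at the higher one, so these strata cannot be dismissed. Finally, the claim that the admissible parameters form a connected set with the anchors interior needs justification: the tuple $(\eta_1,\dots,\eta_{N-1})$ is constrained by the Levy--Khintchine representation (\ref{LK}) together with $\eta_N=N\pm 2$, a trigonometric moment-type condition, and the paper's own $N=6$ example shows that arbitrarily small admissible perturbations of an anchor (the truncated SLE) already produce complex eigenvalues --- i.e.\ the anchors sit adjacent to the degenerate/resonant strata rather than safely inside the nice locus. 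What would actually settle the conjecture is quantitative control of $c_\ast$ (an explicit connection formula for the Fuchsian system (\ref{Fuchs}), e.g.\ via isomonodromy or an integral representation exploiting the tridiagonal structure and the polynomials (\ref{Pn})); you identify this correctly, but neither you nor the paper supplies it.
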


Moreover, one can generalize the above conjecture for the case when the truncation conditions do not necessarily hold. Indeed, in this case matrix $B$ is infinite and we can look for a maximal eigenvalue of this infinite matrix.
In more detail, one has to take an infinite set of ``formally orthogonal" polynomials and look at the sequence of maximal roots (or equally at the sequence of maximal eigenvalues $\beta_{\rm max}(N)$ of the sub-matrices of dimension $N$ of an infinite matrix $B$). One expects this sequence to converge to $\beta(2)$ as $N\to\infty$.

We have checked this hypothesis numerically for SLE$_\kappa$ (where $\beta(q)$-spectrum is known for any $\kappa$ and $q$) for different $\kappa$. Also, for PLE$_\lambda$ numerical computations suggest that the maximal eigenvalue is a smooth function of $\lambda$ (see Figure \ref{PLE}). When the $N$-truncation takes place, the maximal eigenvalue of the infinite matrix equals the one of its sub-matrix of size $N$, i.e. the sequence $\beta_{\rm max}(M)$ reaches its limit at $M=N$.

Therefore, one may state the following

\begin{conjecture}

Let the whole-plane LLE is driven by a Levy process without drift. Then $\beta(2)$ is a limit of the sequence of maximal real eigenvalues of matrices $B_N$, where $B_N$ is given by (\ref{B}) for the unbounded version of LLE or (\ref{Bext}) for the bounded version  (or, equivalently, the sequence of maximal roots of polynomials set by the recurrence relation (\ref{Pn},\ref{ab}) or (\ref{Pn},\ref{ab_ext}) correspondingly).

\end{conjecture}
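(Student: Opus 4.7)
The plan is to interpret the sequence of matrices $B_N$ as finite-dimensional Galerkin truncations of the infinite Jacobi-type operator $B$ naturally associated with the full integro-differential equation (\ref{Lrho},\ref{Leta}) at $q=2$. The claim then factors through two statements: (a) the maximal real spectral point of the infinite operator $B$ equals $\beta(2)$, i.e.\ Lemma \ref{blowup} extends to the non-truncated setting; and (b) the sequence $\beta_{\max}(N)$ of maximal real eigenvalues of the principal $N\times N$ submatrices converges to this maximal real spectral point of $B$.

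For step (a), I would repeat the argument of Lemma \ref{blowup} without assuming truncation. Writing $\rho = (1\mp w)(1\mp\bar w)\Theta$ and expanding $\Theta$ as in (\ref{theta_1}) yields the same infinite three-term differential recurrence (\ref{rec_diff}), which in vector form is the infinite Fuchsian system $\xi(\xi-1)\theta'(\xi)=-\xi A\theta(\xi)+(\xi-1)B\theta(\xi)$ with $A,B$ three-diagonal. Near $\xi=1$ one expects an asymptotic decomposition $\Theta\sim\sum(1-r)^{-\operatorname{Re}\beta_l}(\cdots)$ indexed now by the (possibly continuous) spectrum of $B$; the positivity argument of Lemma \ref{blowup}, which uses only that $\rho\ge 0$ does not change sign on $\xi<1$, then forces $\beta(2)$ to coincide with the maximal real spectral point of $B$. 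For step (b), I would exploit the three-diagonal structure: the zeros of the formal orthogonal polynomials $P_N(\beta)$ defined by (\ref{Pn}) are exactly the eigenvalues of $B_N$, so standard Jacobi-matrix theory applies whenever $a_n>0$. Under that positivity condition, interlacing of zeros of $P_N$ and $P_{N+1}$ makes $\beta_{\max}(N)$ monotone (in the appropriate direction) and its convergence to the top of the spectrum of the self-adjoint closure of $B$ is automatic.

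The main obstacle is the generic case when some of the $a_n$ in (\ref{ab}) or (\ref{ab_ext}) are negative, so that $P_n$ are only formally orthogonal and $B$ is not self-adjoint on $\ell^2$. Both step (a) and step (b) become delicate here: in (a) one loses the clean real/complex dichotomy used at the end of the proof of Lemma \ref{blowup}, because the oscillatory cosine expansion (\ref{cosi}) may involve infinitely many frequencies; in (b) finite-section approximations of non-self-adjoint infinite matrices are in general unstable (spectral pollution), so a separate argument would be needed to rule out spurious maximal eigenvalues of $B_N$ wandering away from the spectrum of $B$. A reasonable intermediate target is therefore to prove the conjecture first under the restriction $a_n>0$ for all $n$ (which already covers the PLE$_\lambda$ family and all SLE$_\kappa$ with sufficiently small $\kappa$), and then try to pass to the general case by a continuous deformation argument combined with the non-vanishing and uniqueness properties of the analytic-at-$\xi=0$ solution constructed in (\ref{transfer}).
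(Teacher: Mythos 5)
The statement you are trying to prove is Conjecture~2 of the paper: the authors offer no proof at all, only numerical evidence (finite-section computations for SLE$_\kappa$, where $\beta(q)$ is known, and for PLE$_\lambda$) together with the remark that under $N$-truncation the sequence $\beta_{\rm max}(M)$ stabilizes at $M=N$. So there is no paper proof to compare against, and the only question is whether your proposal closes the conjecture. It does not, and you largely say so yourself; but one gap is worse than you acknowledge. Your step (a) asserts that the positivity argument of Lemma~\ref{blowup} ``forces $\beta(2)$ to coincide with the maximal real spectral point of $B$.'' Even in the finite (truncated) case this is false as a reading of the Lemma: the Lemma only shows that $\beta(2)$ equals \emph{some} non-negative eigenvalue of $B$. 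The sign argument rules out a complex leading exponent and a negative blow-up rate; it does not rule out the analytic-at-$\xi=0$ solution simply lacking the component along the top Fuchsian exponent, which is exactly the ``accident'' the paper discusses and the reason the truncated identification with the \emph{maximal} eigenvalue is itself left open as Conjecture~1 (proved only for $N=2$, for $N$-truncated SLE, for small deformations in Theorem~\ref{bounded_q2}, and for bounded $N<5$ via Descartes' rule). Your program therefore silently subsumes an open problem already in its finite-dimensional base case.

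The infinite-dimensional steps have further concrete failure points. The local expansion (\ref{fuchs_xi=1}) is a theorem of \emph{finite}-dimensional regular-singular theory (Ilyashenko--Yakovenko); for the non-truncated system there is no analogue, the relevant ``spectrum'' of the infinite $B$ need not be discrete, and the finite-cosine-sum sign dichotomy (\ref{cosi}) collapses with infinitely many frequencies --- as you note. In step (b), even in your favorable regime $a_n>0$, note that for processes with a Brownian component $\eta_n\sim\kappa n^2/2$, so $a_n\sim\eta_n\eta_{n-1}/4$ grows like $n^4$ and the diagonal $b_n\to-\infty$: the symmetrized Jacobi operator is unbounded, the Carleman condition $\sum a_n^{-1/2}=\infty$ fails, and essential self-adjointness/determinacy of the moment problem is not automatic, so ``convergence to the top of the spectrum of the self-adjoint closure is automatic'' is not a safe claim; interlacing gives monotonicity of $\beta_{\rm max}(N)$, but identifying the limit with a spectral point of a distinguished extension needs an argument. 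For general sign patterns of $a_n$ you correctly identify spectral pollution as unresolved. In short: your outline is a sensible research plan consistent in spirit with the paper's own heuristics (finite sections of the infinite tridiagonal $B$, formal orthogonal polynomials, the sequence of maximal roots), but it is not a proof, and its step (a) misattributes to Lemma~\ref{blowup} a maximality statement that the paper explicitly flags as requiring a non-elementary proof even in the truncated case. (Minor slip: the substitution should read $\rho=(1-w)(1-\bar w)\Theta$ for the unbounded version and $\rho=(1-w^{-1})(1-\bar w^{-1})\Theta$ for the bounded one, not $(1\mp w)(1\mp\bar w)\Theta$.)
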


\end{section}


\begin{thebibliography}{101}

\bibitem{A} Applebaum D, {\it Levy Processes - From Probability to Finance and Quantum Groups}, Notices of the AMS,  51 (11): 1336–1347 (2004)

\bibitem{BS}
D. Beliaev, S. Smirnov, \emph{Harmonic measure and SLE}, Commun.
Math. Phys. 290, 577–595 (2009).

\bibitem{BDZ} D. Beliaev, B. Duplantier, and M. Zinsmeister, {\it Integral means spectrum of whole-plane SLE}, Comm. Math. Phys. Volume 353, Issue 1, pages 119–133, (2017).

\bibitem{C} John Cardy, \emph{SLE for theoretical physicists}, Ann.Phys. 318 (2005) 81-118

\bibitem{Ch} T. Chihara, {\it An Introduction to Orthogonal Polynomials}, Gordon and Breach, NY, 1978

\bibitem{D} B.Duplantier, \emph{Conformally invariant fractals and potential theory}, Phys.Rev.Lett., 84(7): 1363-1367,(2000)

\bibitem{DNNZ} Bertrand Duplantier, Thi Phuong Chi Nguyen, Thi Thuy Nga Nguyen, Michel Zinsmeister, \emph{The Coefficient Problem and Multifractality of Whole-Plane SLE and LLE}, Annales Henri Poincare,  Volume 16, Issue 6, pp 1311 - 1395, (2015)

\bibitem{G}
Ilya A.~Gruzberg, \emph{Stochastic geometry of critical curves,
Schramm--Loewner evolutions and conformal field theory}, J.~Phys.~A:
Math.~Gen. \textbf{39}, no.~41 (2006) 12601--12655.

\bibitem{HHD} Halsey, T. C., Honda, K., Duplantier, B. (1996), {\it Multifractal dimensions for branched growth}, J.Stat.Phys., 85(5-6), pp 681–743

\bibitem{HJKPS} T.C. Halsey, M.H. Jensen, L.P. Kadanoff, I. Procaccia, and B.I. Shraiman, \emph{Fractal measures and their singularities: the characterization of strange sets}, Phys. Rev. A (3), 33(2): p. 1141, 1986.

\bibitem{Has}
Matthew B.~Hastings, \emph{Exact Multifractal Spectra for Arbitrary
Laplacian Random Walks}, \emph{Phys. Rev. Lett.} \textbf{88} (2002)

\bibitem{HL} M.B. Hastings, L.S. Levitov, {\it Laplacian growth as one-dimensional turbulence}, Physica D, 116 (1998), 244-252

\bibitem{IY} Yu. Ilyashenko, S. Yakovenko, {\it Lectures on analytic differential equations}, Graduate Studies in Mathematics, 86. American Mathematical Society, Providence, RI, 2008. xiv+625 pp. ISBN: 978-0-8218-3667-5

\bibitem{JS} Fredrik Johansson, Alan Sola, {\it Rescaled Levy-Loewner hulls and random growth}, Bulletin des Sciences Mathématiques, Volume 133, Issue 3, April 2009, Pages 238-256


\bibitem{GL}
Gregory F.~Lawler, \emph{Conformally invariant processes in the
plane}, Mathematical Surveys and Monographs, \textbf{114}, Amer.
Math. Soc., Providence, RI, (2005).

\bibitem{GL2}
Gregory F.~Lawler, \emph{Conformal invariance and 2D statistical
physics}, \emph{Bull. Amer. Math. Soc.} \textbf{46} (2009) 35--54.

\bibitem{LSLE}
Igor Loutsenko, \emph{SLE$_\kappa$: correlation functions in  the
coefficient problem}, J. Phys. A: Math. Theor. 45 275001,
2012,(2012)

\bibitem{LYSLE} Igor Loutsenko, Oksana Yermolayeva,
\emph{On Exact Multi-Fractal Spectrum of the Whole-Plane SLE}, J.Stat. Mech., (2013) doi: 10.1088/1742-5468/2013/04/P4007

\bibitem{LYLLE} Igor Loutsenko, Oksana Yermolayeva, {\it New exact results in spectra of stochastic Loewner evolution}, J. Phys. A, 47(16):165202, 15, 2014.

\bibitem{M} N. G. Makarov, {\it Fine structure of harmonic measure}, St. Petersburg Math.
J., 10(2):217–268, 1999.


\bibitem{ORGK}
P. Oikonomou, I. Rushkin, I. A. Gruzberg, and L. P. Kadanoff,
{\it Global properties of stochastic Loewner evolution driven
by L\'evy processes}, J. Stat.
Mech. (2008) P01019

\bibitem{ROKG}
I. Rushkin, P. Oikonomou, L. P. Kadanoff, and I. A. Gruzberg, {\it
Stochastic Loewner evolution driven by Levy processes}, J. Stat.
Mech. (2006) P01001

\bibitem{Wiki} \url{http://en.wikipedia.org/wiki/Multifractal_system}

\end{thebibliography}
\end{document}